\long\def\comment#1{}
\begin{document}
\pagestyle{headings}
\mainmatter
\def\ECCVSubNumber{2463}  

\title{Targeted Attack for Deep Hashing based Retrieval} 


\titlerunning{Targeted Attack for Deep Hashing based Retrieval} 
%

\author{
Jiawang Bai \inst{1,2,\ast} \and
Bin Chen \inst{1,2,\ast}  \and
Yiming Li \inst{1,\ast}   \and
Dongxian Wu \inst{1,2} \and \\
Weiwei Guo \inst{3} \and
Shu-Tao Xia \inst{1,2} \and
En-Hui Yang \inst{4}}

\renewcommand{\thefootnote}{\fnsymbol{footnote}}
\footnotetext{$^\ast$Equal contribution.  \\
Correspondence to: Bin Chen (\href{mailto:cb17@mails.tsinghua.edu.cn} {cb17@mails.tsinghua.edu.cn}).}

\authorrunning{J. Bai, B. Chen and Y. Li et al.}

\institute{Tsinghua Shenzhen International Graduate School, Tsinghua University \and 
PCL Research Center of Networks and Communications, Peng Cheng Laboratory \and
vivo AI Lab \and
Department of Electrical and Computer Engineering, University of Waterloo
}

\maketitle

\begin{abstract}
The deep hashing based retrieval method is widely adopted in large-scale image and video retrieval. However, there is little investigation on its security.
In this paper, we propose a novel method, dubbed deep hashing targeted attack (DHTA), to study the targeted attack on such retrieval. 
Specifically, we first formulate the targeted attack as a \emph{point-to-set} optimization, which minimizes the average distance between the hash code of an adversarial example and those of a set of objects with the target label. Then we design a novel \emph{component-voting scheme} to obtain an \emph{anchor code}
as the representative of the set of hash codes of objects with the target label, whose optimality guarantee is also theoretically derived. 
To balance the performance and perceptibility, we propose to minimize the Hamming distance between the hash code of the adversarial example and the anchor code under the $\ell^\infty$ restriction on the perturbation. 
Extensive experiments verify that DHTA is effective in attacking both deep hashing based image retrieval and video retrieval. 

\keywords{Targeted Attack, Deep Hashing, Adversarial Attack, Similarity Retrieval}
\end{abstract}

\section{Introduction}

High-dimension and large-scale data approximate nearest neighbor (ANN) retrieval has been widely adopted in online search engines, $e.g.$, Google or Bing, due to its efficiency and effectiveness. Within all ANN retrieval methods, hashing-based methods \cite{wang2017survey} have attracted a lot of attentions due to their compact binary representations and rapid similarity computation between hash codes with Hamming distance. In particular, deep learning based hashing methods \cite{liu2016deep,cao2018deep,chen2018deep,hu2018deep,shen2018unsupervised,li2018self} have shown their superiority in performance since they generally learn more meaningful semantic hash codes through learnable hashing functions with deep neural networks (DNNs).

\begin{figure}[ht]
	\centering
	\includegraphics[width=0.8\textwidth]{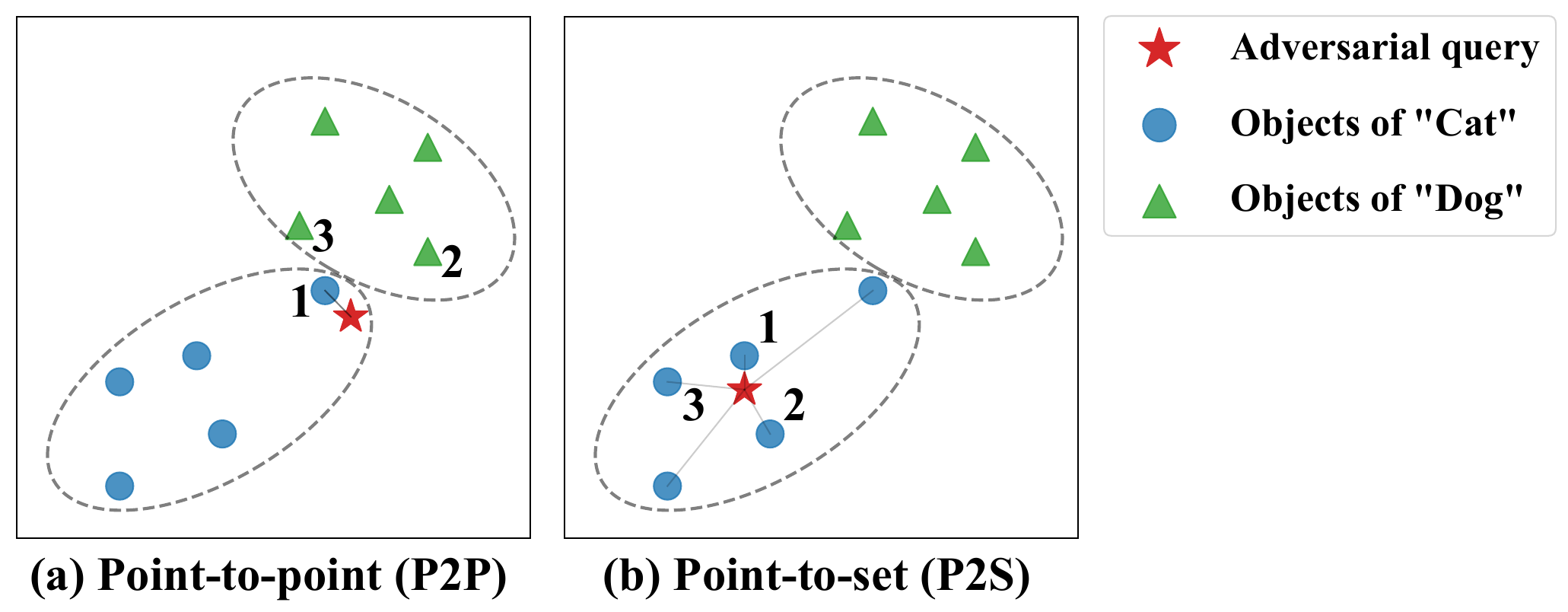}
	\caption{The comparison between the P2P attack paradigm and proposed P2S paradigm. 
	There are two object classes ($i.e.$ `Cat' and `Dog') as shown above, where the target label being attack is `Cat'. In the P2P paradigm, a object with the target label is randomly selected as the reference to generate the adversarial query. But if the selected object is close to the category boundary (dotted lines in the figure) or is an outlier, the attack performance will be poor. In this example, the `targeted attack success rate' of P2P and P2S is $33.3\%$ and $100\%$, respectively. }
	\label{fig:intro}
\end{figure}

Recent studies \cite{szegedy2013,goodfellow2014,xu2019exact,sapf2020,bai2020improving,chen2020boosting} revealed that DNNs are vulnerable to adversarial examples, which are crafted by adding intentionally small perturbations to benign examples and fool DNNs to confidently make incorrect predictions. While deep retrieval systems take advantage of the power of DNNs, they also inherit the vulnerability to adversarial examples \cite{feng2020universal,li2019universal,tolias2019targeted,yang2018adversarial}. Previous research \cite{yang2018adversarial} only paid attention to design a non-targeted attack in deep hashing based retrieval, \textit{i.e.}, returning retrieval objects with incorrect labels. Compared with non-targeted attacks, targeted attacks are more malicious since they make the adversarial examples misidentified as a predefined label and can be used to achieve some malicious purposes \cite{carlini2018audio,eykholt2018robust,qin2019imperceptible}. For example, a hashing based retrieval system may return violent images when a child queries with an intentionally perturbed cartoon image by the adversary. Accordingly, it is desirable to study the targeted attacks on deep hashing models and address their security concerns.

This paper focuses on the targeted attack in hashing based retrieval. Different from classification,  retrieval aims at returning multiple relevant objects instead of one result, which indicates that the query has more important relationship with the set of relevant objects than with other objects. Motivated by this fact, we formulate the targeted attack as a \emph{point-to-set} (P2S) optimization, which minimizes the average distance between the compressed representations (\textit{e.g.}, hash codes in Hamming space) of the adversarial example and those of a set of objects with the target label. Compared with the \emph{point-to-point} (P2P) paradigm \cite{tolias2019targeted} which directs the adversarial example to generate a representation similar to that of a randomly chosen object with the target label, our proposed point-to-set attack paradigm is more stable and efficient.
The detailed comparison between P2S and P2P attack paradigm is shown in Figure \ref{fig:intro}.
In particular, when minimizing the average Hamming distances between a hash code and those of an object set, we prove that the globally optimal solution (dubbed \emph{anchor code}) can be achieved through a simple component-voting scheme, which is a gift from the nature of hashing-based retrieval. Therefore, the anchor code can be naturally chosen as a targeted hash code to direct the generation of adversarial query.
To further balance the attack performance and the imperceptibility, we propose a novel attack method, dubbed \textit{\textbf{d}eep \textbf{h}ashing \textbf{t}argeted \textbf{a}ttack} (DHTA), by minimizing the Hamming distance between the hash code of adversarial query and the anchor code under the $\ell^\infty$ restriction on the adversarial perturbations.

In summary, the main contribution of this work is four-fold:

\begin{itemize}
    \item We formulate the targeted attack on hashing retrieval as a point-to-set optimization instead of the common point-to-point paradigm considering the characteristics of retrieval tasks.
    \item We propose a novel component-voting scheme to 
    obtain an anchor code as the representative of the set of hash codes of objects with the target label, whose theoretical optimality of proposed attack paradigm with average-case point-to-set metric is discussed.
    \item We develop a simple yet effective targeted attack, the DHTA, which efficiently balances the attack performance and the perceptibility. This is the first attempt to design a targeted attack on hashing based retrieval.
    \item Extensive experiments verify that DHTA is effective in attacking both deep hashing based image retrieval and video retrieval.
\end{itemize}

\section{Related Work}

\subsection{Deep Hashing based Similarity Retrieval}

Hashing methods can map semantically similar objects to similar compact binary codes in Hamming space, which are widely adopted to accelerate the ANN retrieval \cite{wang2017survey}. The classical version of data-dependent hashing consists of two parts, including hash function learning and binary inference \cite{li2012spectral,liu2012supervised,wang2015learning,shen2015supervised}.

Recently, more and more deep learning techniques were introduced to the traditional hashing-based retrieval methods and reach state-of-the-art performance, thanks to the powerful feature extraction of deep neural networks. The first deep hashing method was proposed in \cite{xia2014supervised} focusing on image retrieval. Recent works showed that learning hashing mapping in an end-to-end manner can greatly improve the quality of the binary codes \cite{lai2015simultaneous,liu2016deep,cao2017hashnet,cao2018deep}. The above-mentioned methods can be easily extended to multi-label image retrieval, $e.g.$, \cite{zhao2015deep,wu2017deep}. Depending on the availability of unlabeled images, other researchers devoted to design novel hashing methods to cope with the lack of labeled images, $e.g.$, unsupervised deep hashing method \cite{yang2019distillhash}, and semi-supervised one
\cite{yan2017semi}. Different from deep image hashing methods, deep video hashing usually first extract frame features by a convolutional neural network (CNN), then fuse them to learn global hashing function. Among various kinds of fusion methods, recurrent neural network (RNN) architecture is the most common choice, which can well model the temporal structure of videos \cite{gu2016supervised}. Moreover, some of the unsupervised video hashing methods were also proposed \cite{wu2018unsupervised,li2019neighborhood}, which organize the hash code learning in a self-taught manner to reduce the time and labor consuming labeling.

\subsection{Adversarial Attack}

DNNs can be easily fooled to confidently make incorrect predictions by intentional and human-imperceptible perturbations. The process of generating adversarial examples is called \emph{adversarial attack}, which was initially proposed by Szegedy $et \ al.$ \cite{szegedy2013} in the image classification task. To achieve such adversarial examples, the fast gradient sign method (FGSM) \cite{goodfellow2014} aims to maximize the loss along the gradient direction. After that, projected gradient descent (PGD) \cite{kurakin2016} was proposed to reach better performance. Deepfool finds the smallest perturbation by exploring the nearest decision boundary \cite{moosavi2016}. Except for the aforementioned attacks, many other methods \cite{carlini2017,dong2018,yao2019,wu2020skip} have also been developed to find the adversarial perturbation in the image classification problem. 

Besides, there are also other DNN based tasks that inherit the vulnerability to adversarial examples \cite{xu2019,dong2019efficient,ma2019understanding,duan2020adversarial}. Especially for the deep learning based similarity retrieval, it raises wide concerns on its security issues. For feature-based retrieval, Li $et \ al.$ \cite{li2019universal} focused on non-targeted attack by adding  universal adversarial perturbations (UAPs), while targeted mismatch adversarial attack was explored in \cite{tolias2019targeted}. In \cite{feng2020universal}, adversarial queries for deep product quantization network are generated by perturbing the overall soft-quantized distributions. However, for hashing based retrieval, one of the most important retrieval methods, its robustness analysis is left far behind. There is only one previous work in attacking deep hashing based retrieval \cite{yang2018adversarial}, which paid attention to the non-targeted attack, \textit{i.e.}, returning retrieval objects with the incorrect label. The targeted attack in such retrieval a system remains unaddressed.

\section{The Proposed Method}

\subsection{Preliminaries}

In this section, we briefly review the process of deep hashing based retrieval. Suppose $\bm{X}=\left \{ (\bm{x}_{i}, \bm{y}_{i}) \right \}_{i=1}^{N}$ indicates a set of $N$ sample collection labeled with $C$ classes, where $\bm{x}_{i}$ indicates the retrieval object, $e.g.$, a image or a video, and $\bm{y}_{i}\in\{0,1\}^C$  corresponds to a label vector. The $c$-th component of indicator vector  $\bm{y}_{i}^c=1$ means that the sample $\bm{x}_{i}$ belongs to class $c$. 
Let $\bm{X}^{(t)}=\left \{ (\bm{x}, \bm{y}) \in \bm{X}\mid  \bm{y} = \bm{y}_t \right \}$ be a subset of $\bm{X}$ consisting of those objects with label $\bm{y}_t$.

\noindent \textbf{Deep Hashing Model.}
The hash code of a query object $\bm{x}$ of deep hashing model is generated as follows:
\begin{equation}
\begin{split}
\bm{h}=F(\bm{x})={\rm sign}\left(f_{\theta}(\bm{x}) \right),
\end{split}
\label{eq:f_func}
\end{equation}
where $f_{\theta}(\cdot)$ is a DNN. In general, $f_{\theta}(\cdot)$ consists of a feature extractor followed by the fully-connected layers. Specifically, the feature extractor is usually specified by a CNN for image retrieval \cite{cao2017hashnet,cao2018deep,chen2019deep}, while CNN stacked with RNN is widely adopted for video retrieval \cite{gu2016supervised,song2018self,li2019neighborhood}.
In particular, the $\text{sign}(\cdot)$ function is approximated by the $\text{tanh}(\cdot)$ function during the training process in deep hashing based retrieval methods to alleviate the \emph{gradient vanishing} problem \cite{cao2017hashnet}.

\noindent \textbf{Similarity-based Retrieval.}
\comment{Given a code generation function $F(\cdot)$ and a object database $\{\bm{x}_i\}_{i=1}^N$, the similarity between the hash code of query $\bm{x}$ and that of each object $\bm{x}_i$ in database is denoted by $-d_H(F(\bm{x}), F(\bm{x}_i))$, which is calculated based on the Hamming distance $d_H(\cdot, \cdot)$. The retrieval process will return $M (1 \leq M \leq N)$ objects in database with top-$M$ similarity.}
Given a deep hashing model $F(\cdot)$, a query object $\bm{x}$ and a object database $\{\bm{x}_i\}_{i=1}^M$, the retrieval process is as follows. Firstly, the query $\bm{x}$ is fed into the deep hashing model and binary code $F(\bm{x})$ can be obtained through Eq. (\ref{eq:f_func}). Secondly, the Hamming distance between the hash code of query $\bm{x}$ and that of each object $\bm{x}_i$ in the database is calculated, denoted as $d_H(F(\bm{x}), F(\bm{x}_i))$. Finally, the retrieval system returns a list of objects, which is produced by sorting their Hamming distances.

\subsection{Deep Hashing Targeted Attack}

\noindent\textbf{Problem Formulation.}
In general, given a benign query $\bm{x}$, the objective of targeted attack in retrieval is to generate an attacked version $\bm{x}'$ of $\bm{x}$, which would cause the targeted model to retrieve objects with the target label $\bm{y}_t$. This objective can be achieved through minimizing the distance between the hash code of the attacked sample $\bm{x}'$ and those of the object subset $\bm{X}^{(t)}$ with the target label $\bm{y}_t$, $i.e.$,

\begin{equation}\label{obj_general}
    \min_{\bm{x}'} d\left(F(\bm{x}'\right), F(\bm{X}^{(t)})),
\end{equation}
where $F(\bm{X}^{(t)})=\{F(\bm{x})|\bm{x} \in \bm{X}^{(t)})\}$, and $d(\cdot,\cdot)$ denotes a point-to-set metric.

Once the problem is formulated as objective (\ref{obj_general}), the remaining problem is how to define the point-to-set metric. In this paper, we use the most widely used point-to-set metric, the \emph{average-case metric}, as shown in Definition \ref{defn_metrics}. 

\begin{definition}\label{defn_metrics}
Given a point $\bm{h}_0\in\{-1, +1\}^K$ and a set of points $\mathcal{A}$ in $\{-1, +1\}^K$ and point-to-point metric $d_H$, the average-case point-to-set metric is defined as follows: 
\begin{equation}\label{average}
d_{Ave}\left(\bm{h}_0, \mathcal{A}\right)\triangleq   \frac{1}{|\mathcal{A}|}\sum_{\bm{h}\in\mathcal{A}}d_H(\bm{h}_0, \bm{h}).
\end{equation}

\end{definition}

\begin{remark}
If average-case point-to-set metric is adopted, the objective function (\ref{obj_general}) is specified as  
\begin{equation}\label{obj_ave}
\min_{\bm{h}'}\frac{1}{|\mathcal{A}|} \sum_{\bm{h} \in F(\bm{X}^{(t)})}\,d_H(\bm{h}', \bm{h}),
\end{equation}
where $\bm{h}'$ is the hash code corresponding to the adversarial example $\bm{x}'$.
\end{remark}

In particular, there exists an analytical optimal solution (dubbed \emph{anchor code}) of the optimization problem (\ref{obj_ave}) obtained through a \emph{component-voting scheme},  which is a property arising from the nature of Hamming distance of hashing-based retrieval. The \emph{component-voting scheme} is shown in Algorithm \ref{alg_vote}, and the optimality of anchor code is verified in Theorem \ref{thm}. The proof is shown in the \textbf{Appendix}.

\begin{theorem} \label{thm}
 Anchor code ${\bm{h}_a}$ calculated by Algorithm \ref{alg_vote} is the binary code achieving the minimal sum of Hamming distances with respect to ${\bm{h}_i}$, $i = 1, \dots, n_t$, i.e., 
 \begin{equation}
 {\bm{h}_a}=\arg \min_{\bm{h}\in\{+1,-1\}^K}\sum_{i=1}^{n_t}d_H({\bm{h}}, {\bm{h}_i}).
 \end{equation}
\end{theorem}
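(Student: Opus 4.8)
The plan is to exploit the fact that the Hamming distance is \emph{separable} across the $K$ coordinates, which collapses the global optimization over $\{+1,-1\}^K$ into $K$ independent scalar problems that the voting scheme solves coordinate by coordinate. First I would write, for any candidate code $\bm{h}=(h^{(1)},\dots,h^{(K)})$, the decomposition
\[
\sum_{i=1}^{n_t} d_H(\bm{h},\bm{h}_i)=\sum_{i=1}^{n_t}\sum_{k=1}^{K}\mathbf{1}\!\left[h^{(k)}\neq h_i^{(k)}\right]=\sum_{k=1}^{K}\left(\sum_{i=1}^{n_t}\mathbf{1}\!\left[h^{(k)}\neq h_i^{(k)}\right]\right),
\]
where I have merely counted disagreements bit by bit and swapped the order of summation. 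The structural point is that the $k$-th bracketed summand depends on $\bm{h}$ only through its $k$-th bit $h^{(k)}$, and since the feasible set $\{+1,-1\}^K$ is an unconstrained product of the per-coordinate choices $\{+1,-1\}$ with no coupling constraints, minimizing the whole sum is equivalent to minimizing each bracket separately.

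Next I would analyze a single coordinate. Fixing $k$ and letting $p_k$ and $q_k$ denote the numbers of indices $i$ with $h_i^{(k)}=+1$ and $h_i^{(k)}=-1$ respectively (so $p_k+q_k=n_t$), the inner count equals $q_k$ when $h^{(k)}=+1$ and $p_k$ when $h^{(k)}=-1$. Hence the per-coordinate minimum is $\min\{p_k,q_k\}$, attained by setting $h^{(k)}$ to the majority sign among $h_1^{(k)},\dots,h_{n_t}^{(k)}$, which is precisely the rule executed by the component-voting scheme of Algorithm \ref{alg_vote}. Assembling these optimal bits over all $k$ shows that $\bm{h}_a$ attains the value $\sum_{k=1}^{K}\min\{p_k,q_k\}$, and because this quantity lower-bounds $\sum_{i=1}^{n_t} d_H(\bm{h},\bm{h}_i)$ for every admissible $\bm{h}$, the code $\bm{h}_a$ is globally optimal.

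I do not anticipate a serious obstacle, since the coordinate decoupling reduces the claim to a one-line majority-vote fact repeated $K$ times; the only point demanding care is tie-breaking. When $n_t$ is even and $p_k=q_k$ for some coordinate, both bit values achieve the same minimum $\min\{p_k,q_k\}$, so the optimizer is not unique. I would note that any fixed tie-breaking convention adopted inside Algorithm \ref{alg_vote} still returns an element of the argmin, so the stated conclusion that $\bm{h}_a$ is \emph{a} minimizer holds regardless; this also clarifies that the theorem asserts membership in the argmin set rather than uniqueness.
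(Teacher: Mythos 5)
Your proposal is correct and follows essentially the same route as the paper's proof: both exploit the coordinate-wise separability of the Hamming distance and observe that the majority vote minimizes each coordinate's contribution independently, which is exactly the paper's key inequality $\sum_{i=1}^{n_t}\mathbb{I}(\bm{h}_a^j=\bm{h}_i^j)\geq\sum_{i=1}^{n_t}\mathbb{I}(\bm{h}^j=\bm{h}_i^j)$ in its comparison against an arbitrary competitor $\bm{h}$. Your explicit treatment of ties (noting the minimizer need not be unique but $\bm{h}_a$ always lies in the argmin set) is a point the paper leaves implicit, but it does not change the substance of the argument.
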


\begin{algorithm}[!hbtp]
\caption{Component-voting Scheme} 
{\bf Input:} 
$K$-bits hash codes $\{\bm{h}_i\}_{i=1}^{n_t}$ of objects with the target label $t$.\\
{\bf Output:}  
Anchor code $\bm{h}_a$.
\begin{algorithmic}[1]
\For{$j=1 : K$}
\State Conduct voting process through counting up the number of $+1$ and $-1$, denoted by $N_{+1}^j$ and $N_{-1}^j$, respectively. For the $j$-th component among $\{\bm{h}_i\}_{i=1}^{n_t}$, $i.e.$, 
\begin{equation}
N_{+1}^j=\sum_{i}^{n_t} \mathbb I (\bm{h}_{i}^{j}=+1),\qquad N_{-1}^j=\sum_{i=1}^{n_t} \mathbb I (\bm{h}_{i}^{j}=-1),
\end{equation}
where $\mathbb I(\cdot)$ is an indicator function.

\State Determine the $j$-th component of anchor code $\bm{h}_{a}^{j}$ as 
\begin{equation}
\begin{split}
\bm{h}_{a}^{j}=\begin{cases}
+1, & \text{ if } N_{+1}^j \geqslant N_{-1}^j \\ 
-1, & \text{ otherwise } 
\end{cases}
\end{split}
.
\label{eq:ahc}
\end{equation}

\EndFor
\State \Return Anchor code $\bm{h}_a$.
\end{algorithmic}
\label{alg_vote}
\end{algorithm}

\comment{
\begin{proof}
We only need to prove that for any ${{\bm{h}}\in\{+1,-1\}}^K$ and ${\bm{ h}}\neq {\bm{h}_a}$, the following inequality holds.
\begin{equation}
\sum_{i}^{n_t}d_H({\bm{h}_a}, {\bm{h}_i})\leq \sum_{i}^{n_t}d_H({\bm{h}}, {\bm{h}_i}).
\end{equation}
Denote $\mathcal{D}=\{j_1, j_2,\dots, j_{K_0}\}$, $1\leq K_0\leq K$, as the index set where  ${\bm{h}}$ and  ${\bm{h}_a}$ differ. Then we have 
\begin{eqnarray}
&\quad&\sum_{i}^{n_t}d_H({\bm{h}_a}, {\bm{ h}_i})\nonumber\\
&=&\sum_{j\in \mathcal{D}}d_H({\bm{h}_a^j}, {\bm{h}_i^j})+\sum_{j\in \{1, 2, \dots, K\}\setminus\mathcal{D}}d_H({\bm{h}_a^j}, {\bm{h}_i^j})\\
&=&\sum_{j\in \mathcal{D}}\big(n_t-\sum_{i}^{n_t}\mathbb{I}({\bm{h}_a^j}={\bm{h}_i^j})\big)+\sum_{j\in \{1, 2, \dots, K\}\setminus\mathcal{D}}\big(n_t-\sum_{i}^{n_t}\mathbb{I}({\bm{ h}_a^j}={\bm{ h}_i^j})\big)\\
&\overset{(a)}{\leq}& \sum_{j\in \mathcal{D}}\big(n_t-\sum_{i}^{n_t}\mathbb{I}({\bm{h}^j}={\bm{h}_i^j})\big)+\sum_{j\in \{1, 2, \dots, K\}\setminus\mathcal{D}}\big(n_t-\sum_{i}^{n_t}\mathbb{I}({\bm{h}^j}={\bm{h}_i^j})\big)\\
&=&\sum_{i}^{n_t}d_H({\bm{h}}, {\bm{h}_i}),
\end{eqnarray}
where $(a)$ holds since anchor code $\bm{h}_a$ is obtained through a voting process (as shown in Algorithm \ref{alg_vote}), $i.e.$,  $ \forall j\in \mathcal{D}$, 
\begin{equation}
    \sum_{i=1}^{n_t} \mathbb{I}({\bm{h}_a^j}={\bm{h}_i^j})\geq \sum_{i=1}^{n_t} \mathbb{I}({\bm{h}^j}={\bm{h}_i^j}).
\end{equation}

$\hfill\blacksquare$ 

\end{proof}
}

\begin{figure}[ht]
	\centering
	\includegraphics[width=0.95\textwidth]{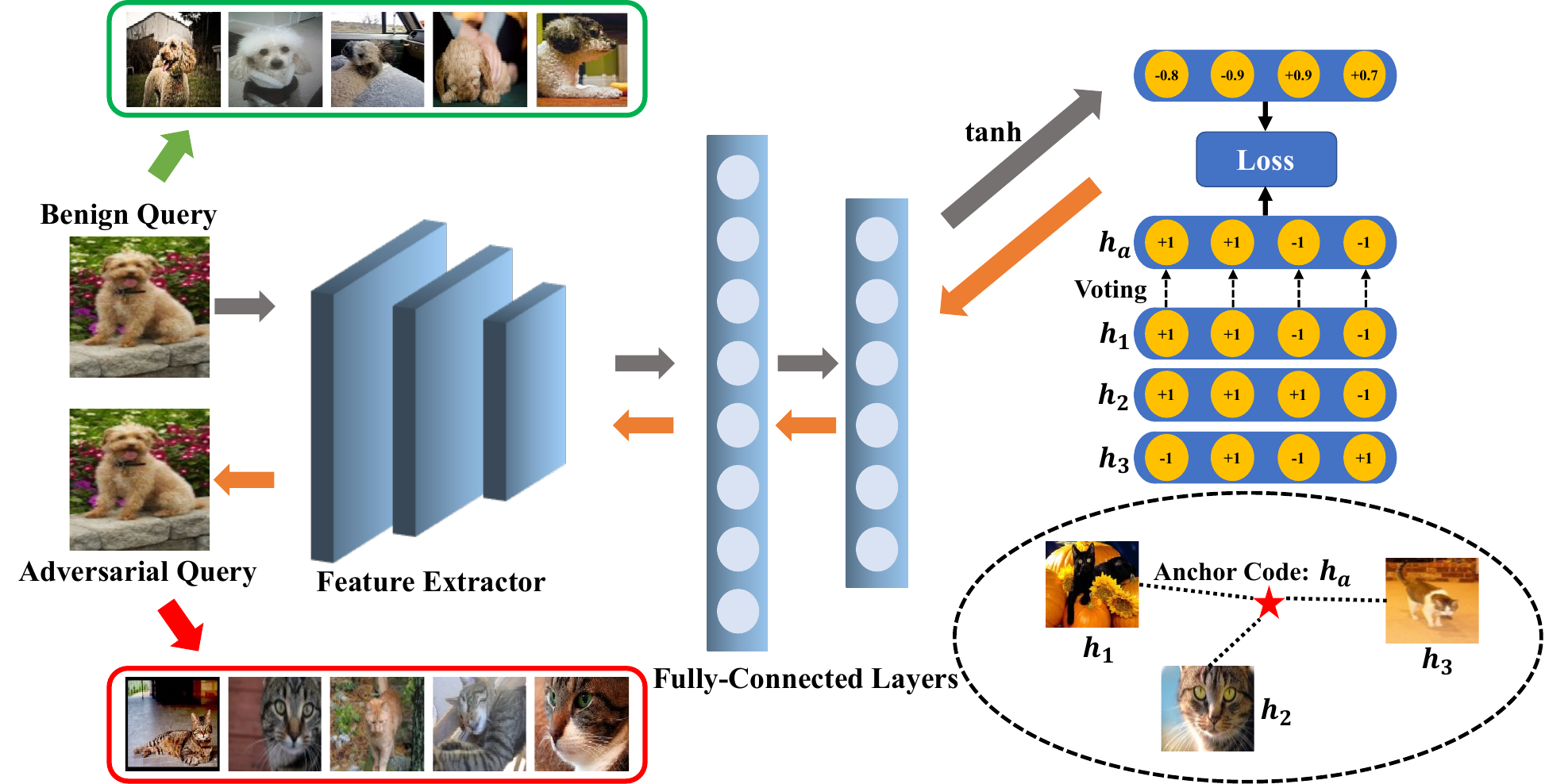}
	\caption{The pipeline of proposed DHTA method, where the gray and orange arrows indicate forward and backward propagation, respectively. The adversarial query is generated through minimizing the loss calculated by its hash code and the anchor code of the set of objects with the target label. The anchor code $h_a$ is calculated through the component-voting scheme ($i.e.$ an entry-wise voting process). In this toy example, $\bm{h}_1$, $\bm{h}_2$ and $\bm{h}_3$ are three 4 bits hash codes of objects with the target label ``Cat''.}
	\label{fig_method}
\end{figure}

\noindent \textbf{Overall Objective Function.}
Due to the optimal representative property of anchor code for the set of hash codes of objects with the target label (Theorem 1), we can naturally choose the anchor code as a targeted hash code to direct the generation of the adversarial query. However, the attacked object corresponding to the anchor code may be far different from the original one visually, which would cause the attacked object to be easily detectable. To solve this problem, we introduce the $\ell^{\infty}$ restriction on the adversarial perturbations 
by minimizing the Hamming distance between the hash code of attacked object and that of the anchor code as follows:

\begin{equation}
\begin{split}
\underset{\bm{x}^\prime}{\text{ min }}  d_H(\text{sign}(f_{\theta}(\bm{x}^\prime)), \bm{h}_a) \ \ \
s.t. \ \ ||\bm{x}^\prime-\bm{x}||_\infty \leq\epsilon,
\end{split}
\label{eq:loss_obj}
\end{equation}
where $\epsilon$ denotes the maximum perturbation strength, $\bm{h}_a$ is the anchor code of object set with the target label.

Besides, given a pair of binary codes $\bm{h}_i$ and $\bm{h}_j$, since $d_{H}(\bm{h}_{i}, \bm{h}_{j})=\frac{1}{2}(K-\bm{h}_i^{\top}\bm{h}_j)$, we can equivalently replace Hamming distance with inner product in the objective function. In particular, similar to deep hashing methods \cite{cao2017hashnet}, we adopt the hyperbolic tangent ($\text{tanh}$) function to approximate sign function for the adversarial generation. Similar to \cite{yang2018adversarial}, we also introduce the factor $\alpha$ to address the \emph{gradient vanishing} problem. In summary, the overall optimization objective of proposed method is as follows:

\begin{equation}
\begin{split}
\underset{\bm{x}^\prime}{\text{ min }}  -\frac{1}{K}\bm{h}_{a}^{\top}{\rm tanh}(\alpha f_{\theta}(\bm{x}^\prime)) \ \ \
s.t. \ \ ||\bm{x}^\prime-\bm{x}||_\infty \leq\epsilon,
\end{split}
\label{eq:loss_final}
\end{equation}
where the hyper-parameter $\alpha \in [0,1]$, $\bm{h}_a$ is the anchor code.

The overall process of proposed DHTA is shown in Figure \ref{fig_method}.

\section{Experiments}

\subsection{Benchmark Datasets and Evaluation Metrics}
\label{sec:datasets_metric}

Four benchmark datasets, including ImageNet \cite{russakovsky2015imagenet}, NUS-WIDE \cite{chua2009nus}, JHMDB \cite{jhuang2013towards}, and UCF-101 \cite{soomro2012ucf101}, are adopted in our experiments. The first two datasets are used for image retrieval, while the last two are used for video retrieval. More details about datasets are described in the \textbf{Appendix}.

For the evaluation of targeted attacks, we define the targeted mean average precision (t-MAP) as the evaluation metric, which is similar to mean average precision (MAP) widely used in information retrieval \cite{zuva2012evaluation}. Specifically,  the referenced label of t-MAP is the targeted label instead of the original one of the query object in MAP. The higher the t-MAP, the better the targeted attack performance. In image hashing, we evaluate t-MAP on top 5,000 and 1,000 retrieved images on NUS-WIDE and ImageNet, respectively. We evaluate t-MAP on all retrieved videos in video hashing. Besides, we also present the precision-recall curves (PR curves) of different methods for more comprehensive comparison.

\begin{table}[ht]
\centering
\caption{t-MAP (\%) of targeted attack methods and MAP (\%) of query with benign objects (‘Original’) with various code lengths on two image datasets.}
\setlength{\tabcolsep}{1mm}{
	\begin{tabular}{lcccccccccc}
		\toprule
		\multirow{2}{*}{Method} & \multirow{2}{*}{Metric} & \multicolumn{4}{c}{ImageNet} & & \multicolumn{4}{c}{NUS-WIDE} \\ \cline{3-6} \cline{8-11}
		& & 16bits & 32bits & 48bits & 64bits & & 16bits & 32bits & 48bits & 64bits \\ \midrule
		Original &t-MAP  & 3.80& 1.36& 1.64& 1.98& &37.62& 36.03& 38.32& 38.69  \\
		Noise    &t-MAP  & 3.29& 1.24& 1.89& 2.10& &37.34& 36.15& 38.25& 38.57  \\
		P2P      &t-MAP  & 44.35& 58.32& 62.50& 65.61& &75.45& 78.59& 81.40& 81.28  \\
		DHTA     &t-MAP  &  \textbf{63.68}&  \textbf{77.76}&  \textbf{82.31}&  \textbf{82.10}& & \textbf{82.35}&  \textbf{85.66}&  \textbf{86.80}&  \textbf{88.84}  \\ \midrule
		Original &MAP    & 51.02& 62.70& 67.80& 70.11& &76.93& 80.37& 82.06& 81.62  \\ \bottomrule
	\end{tabular}
}

\label{tab:image_map}
\end{table}

\begin{figure}[ht]
	\centering
	\includegraphics[width=1\textwidth]{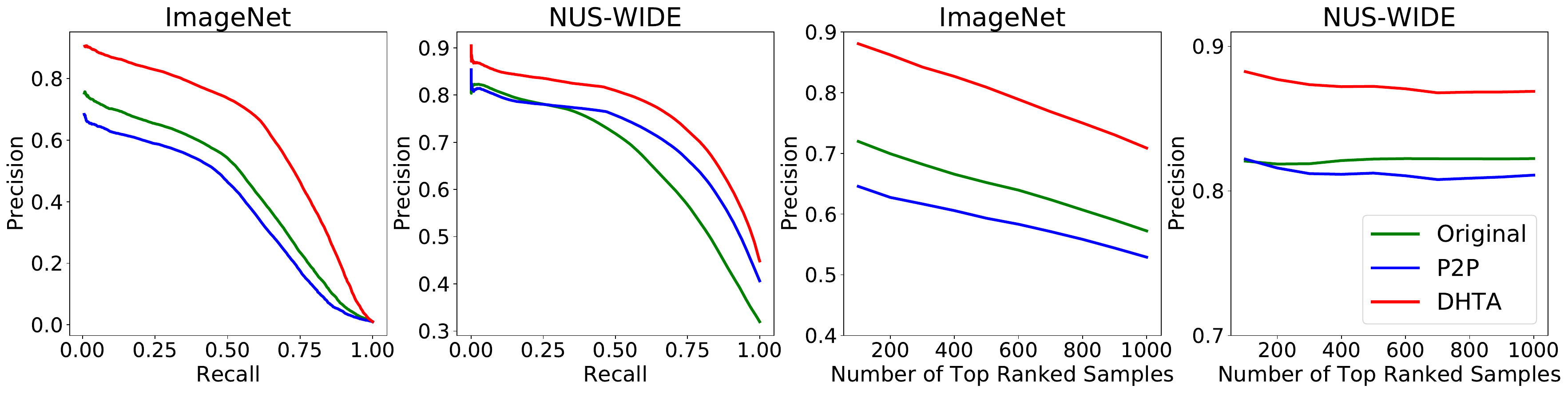}
	\caption{Precision-recall and precision curves under 48 bits code length in image retrieval. P2P attack and DHTA are evaluated based on the target label, while the result of `Original' is calculated based on the label of query object.}
	\label{fig:image_pr_prec}
\end{figure}

\begin{figure}[ht]
	\centering
	\includegraphics[width=0.95\textwidth]{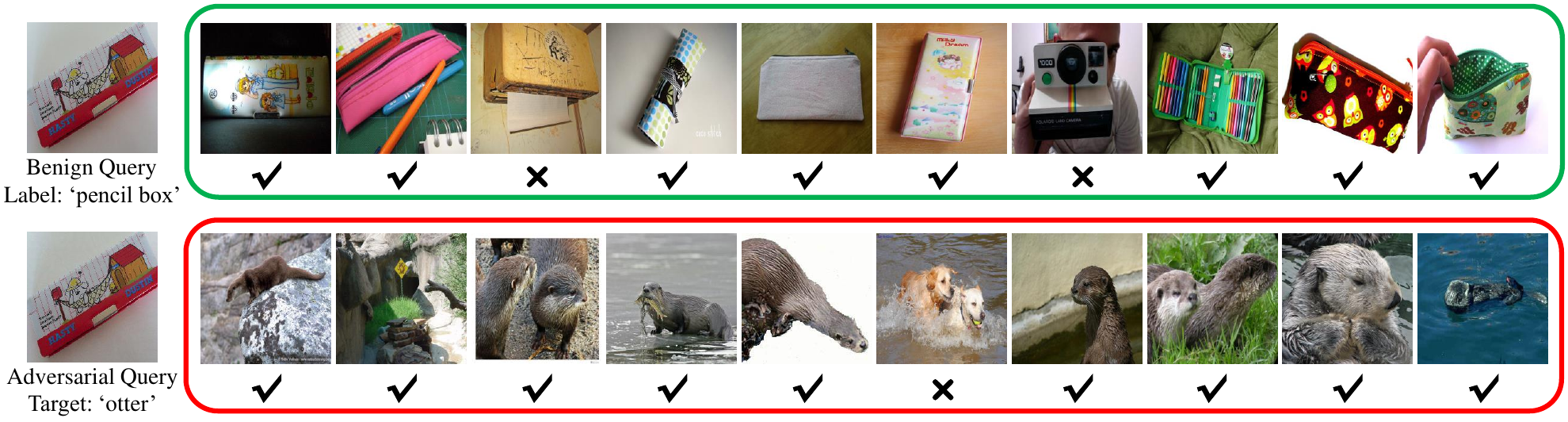}
	\caption{An example of image retrieval with benign query and its adversarial query on ImageNet. Retrieved objects with top-10 similarity are shown in the box. The tick and cross indicate whether the retrieved object is consistent with the desired label (the original label for benign query and the target label for adversarial query).}
	\label{fig:img_exp}
\end{figure}

\subsection{Overall Results on Image Retrieval}
\label{sec:image_setting}

\noindent \textbf{Evaluation Setup.}
For image hashing, we adopt VGG-11 \cite{simonyan2014very} as the backbone network pre-trained on ImageNet to extract features, then replace the last fully-connected layer of softmax classifier with the hashing layer. The detailed settings of training image hashing models are illustrated in the \textbf{Appendix}. For each dataset, we randomly select 100 samples from the query set as benign queries to evaluate the performance of attack. For each generation, we randomly select a label as the target label different from the label of query. When generating an anchor code, we randomly sample images from objects in the database with the target label to form the hash code set. For all adversarial examples, the perturbation magnitude $\epsilon$ of normalized data and $n_t$ is set to 0.032 and 9, respectively. We adopt PGD \cite{kurakin2016} to optimize the proposed attack. We attack image hashing models with learning rate 1 and the number of iterations is set to 2,000. Following \cite{yang2018adversarial}, the parameter $\alpha$ is set as 0.1 during the first 1,000 iterations, and is updated every 200 iterations according to [0.2, 0.3, 0.5, 0.7, 1.0] during the last 1,000 iterations. We compare DHTA with targeted attack with P2P paradigm \cite{tolias2019targeted}, which is specified as DHTA with $n_t=1$. We also show the t-MAP results of images with additive noise sampled from the uniform distribution $U(-\epsilon, +\epsilon)$.

\noindent \textbf{Results.}
The general attack performance of different methods is shown in Table \ref{tab:image_map}. The t-MAP values of query with benign objects (dubbed \emph{Original}) or query with noisy objects (dubbed \emph{Noise}) are relatively small on both ImageNet and NUS-WIDE datasets. Especially on ImageNet dataset, the t-MAP values of two aforementioned methods are closed to 0, which indicates that query with benign images or images with noise can not successfully retrieve objects with the target labels as expected. In contrast, designed targeted attack methods ($i.e.$ P2P and DHTA) can significantly improve the t-MAP values. For example, compared with the t-MAP of benign query on ImageNet dataset, the improvement of P2P methods is over $40\%$ in all cases. Especially under the relatively large code length (64 bits), the improvement even goes to $63\%$. Among two targeted attack methods, the proposed DHTA method achieves the best performance. Compared with P2P, the t-MAP improvement of DHTA is over $16\%$ (usually over $19\%$) in all cases on the ImageNet dataset. Moreover, the t-MAP values of targeted attacks increase as the number of bits, which is probably caused by the extra information introduced in the longer code length. In particular, an interesting phenomenon is that the t-MAP value of DHTA is even significantly higher than the MAP value of `Original', which suggests that the attack performance of DHTA is not hindered by the performance of the original hashing model ($i.e.$ threat model) to some extent. An example of the results of query with a benign image and an adversarial image is displayed in Figure \ref{fig:img_exp}. 

Furthermore, we also provide the precision-recall and  precision curves for a more comprehensive comparison. As shown in Figure \ref{fig:image_pr_prec}, the curves of DHTA are always above all other curves, which demonstrates that the performance of DHTA does better than all other methods.

\begin{table}[t]
	\centering
	\caption{t-MAP (\%) of targeted attack methods and MAP (\%) of query with benign objects (‘Original’) with various code lengths on two video datasets.}
		\setlength{\tabcolsep}{1mm}{
			\begin{tabular}{lcccccccccc}
				\toprule
				\multirow{2}{*}{Method} & \multirow{2}{*}{Metric} & \multicolumn{4}{c}{JHMDB} & & \multicolumn{4}{c}{UCF-101} \\ \cline{3-6} \cline{8-11}
				& & 16bits & 32bits & 48bits & 64bits & & 16bits & 32bits & 48bits & 64bits \\ \midrule
				Original &t-MAP  & 6.73& 6.26& 6.48& 6.89& &1.69& 1.67& 1.79& 1.86  \\
				Noise    &t-MAP  & 6.67& 6.13& 6.50& 6.94& &1.69& 1.72& 1.87& 1.85  \\
				P2P      &t-MAP  & 39.67& 42.37& 44.78& 44.38& &55.57& 53.49& 55.27& 51.88  \\
				DHTA     &t-MAP  &  \textbf{56.47}&  \textbf{62.04}&  \textbf{63.02}&  \textbf{66.06}& & \textbf{67.84}&  \textbf{66.18}&  \textbf{69.72}&  \textbf{67.83}  \\ \midrule
				Original &MAP    & 35.18& 42.46& 45.80& 45.50& &55.16& 55.25& 56.56& 56.79  \\ \bottomrule
			\end{tabular} 
		}
	\label{tab:video_map}
\end{table}

\begin{figure}[t]
	\centering
	\includegraphics[width=1\textwidth]{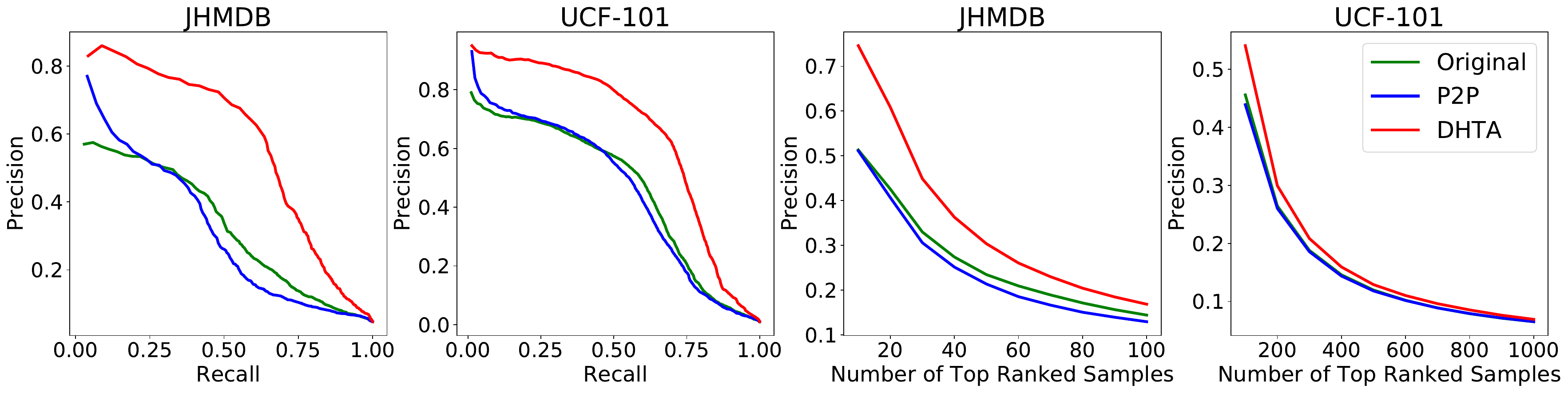}
	\caption{Precision-recall and precision curves under 48 bits code length in video retrieval. P2P attack and DHTA are evaluated based on the target label, while the result of `Original' method is calculated based on the label of query object.}
	\label{fig:video_pr_prec}
\end{figure}

\begin{figure}[ht]
	\centering
	\includegraphics[width=0.99\textwidth]{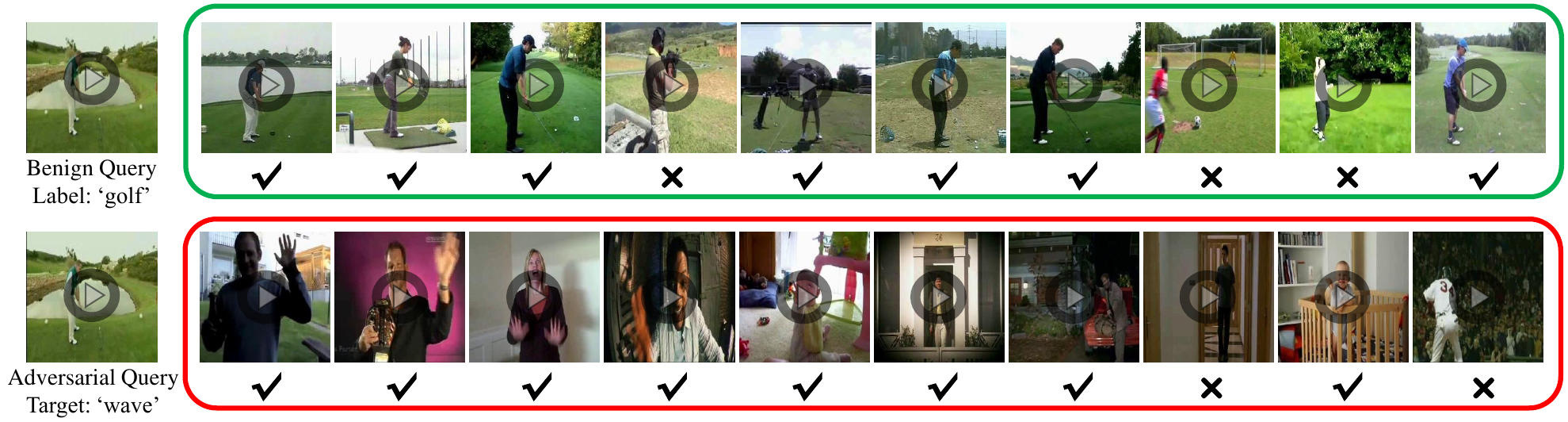}
	\caption{An example of video retrieval with benign query and its adversarial query on JHMDB. Retrieved objects with top-10 similarity are shown in the box. The tick and cross indicate whether the retrieved object is consistent with the desired label (the original label for benign query and the target label for adversarial query).
	}
	\label{fig:video_exp}
\end{figure}

\subsection{Overall Results on Video Retrieval}
\label{sec:video_setting}

\noindent \textbf{Evaluation Setup.}
According to model architectures of the state-of-the-art deep video retrieval methods \cite{gu2016supervised,song2018self,li2019neighborhood}, we adopt AlexNet \cite{krizhevsky2012imagenet} to extract spatial features and LSTM \cite{hochreiter1997long} to fuse the temporal information. The detailed settings of training video hashing model are presented in the \textbf{Appendix}. For attacking video hashing, the number of iterations is 500, and the parameter $\alpha$ is fixed at 0.1. Other settings are the same as those used in Section \ref{sec:image_setting}.

\begin{figure}[ht]
    \centering
    \subfigure{
    \includegraphics[width=0.485\textwidth]{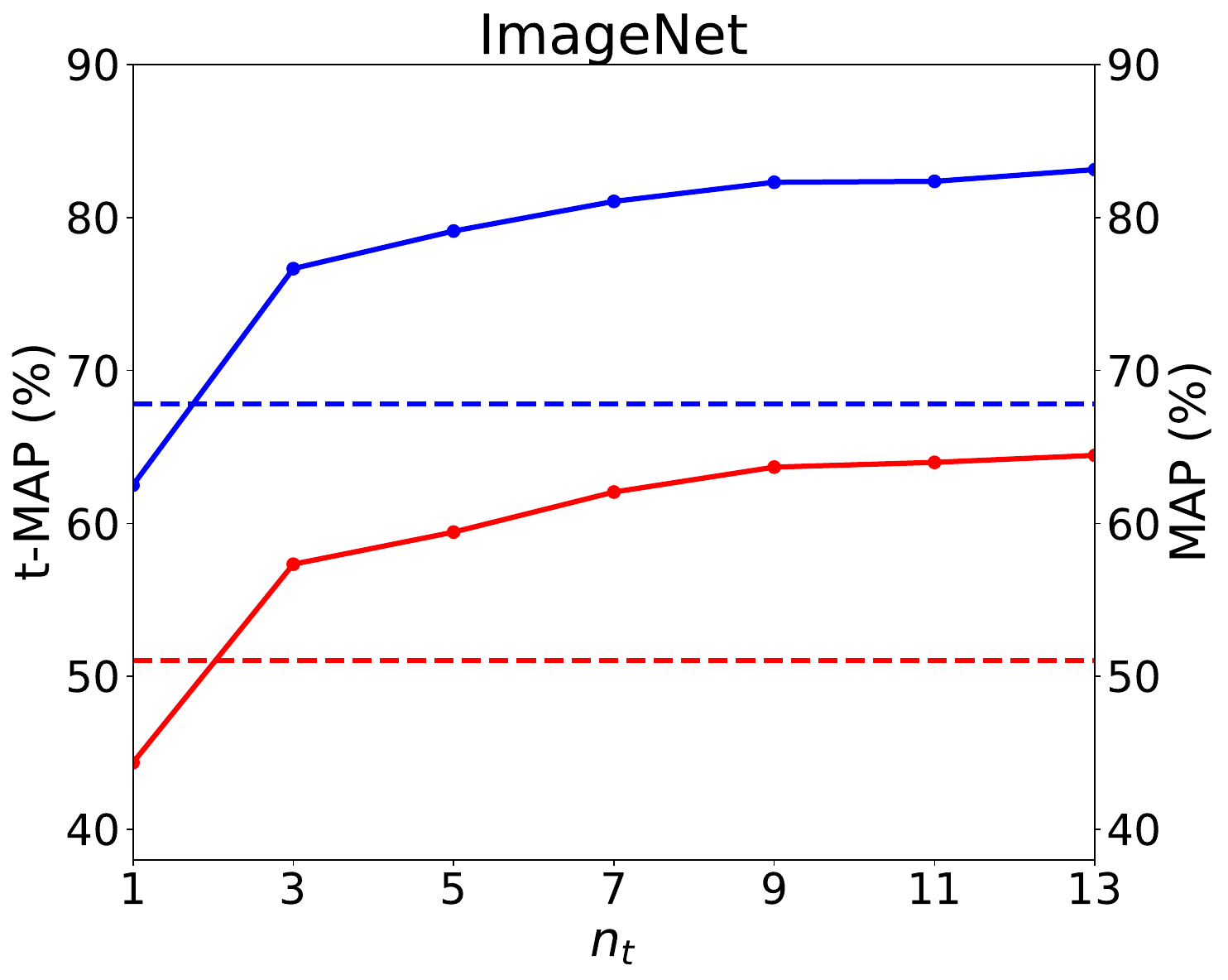}}
    \subfigure{
    \includegraphics[width=0.485\textwidth]{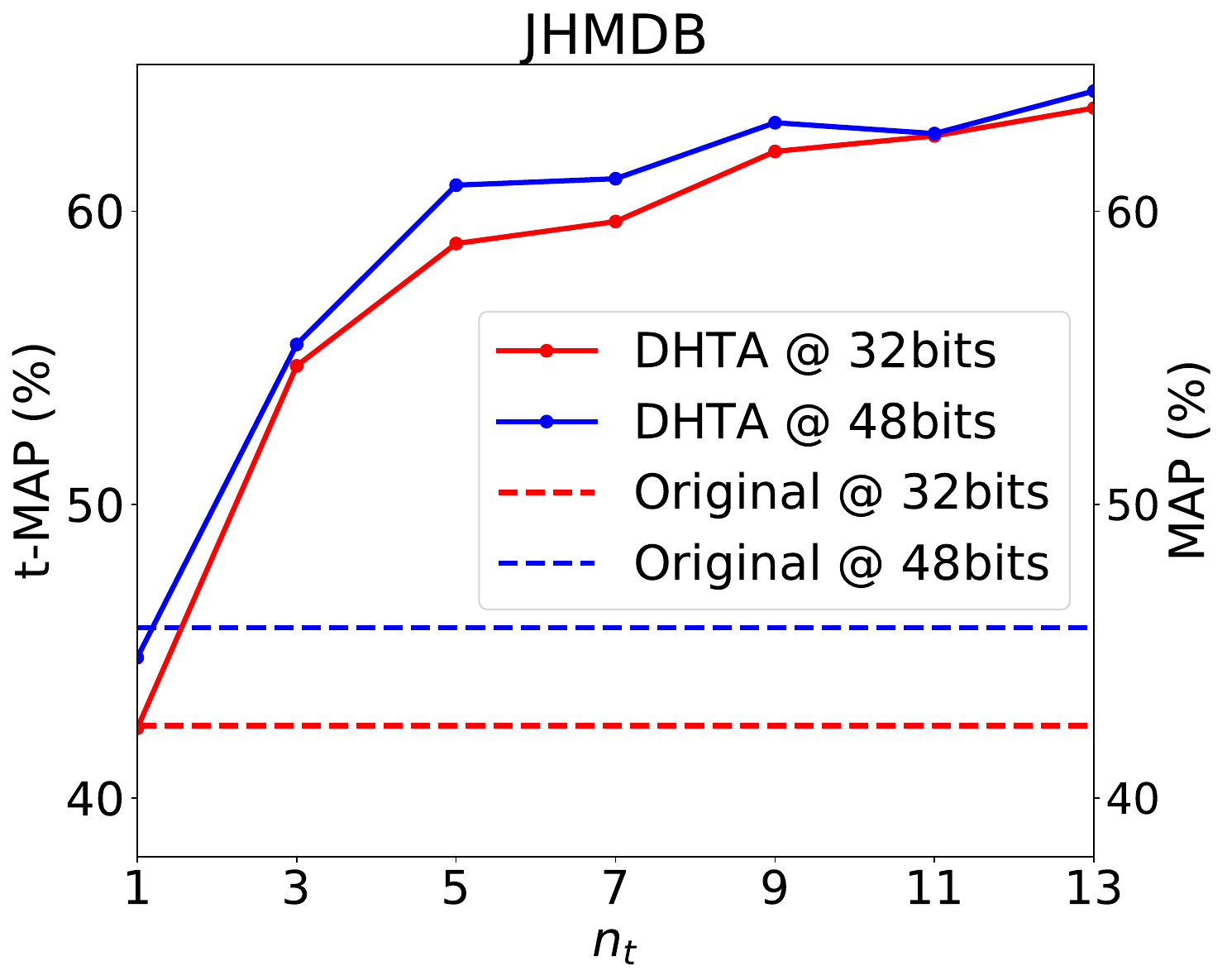}}
    \caption{t-MAP (\%) of DHTA and MAP (\%) of query with benign objects (`Original') with different $n_t$ and code length on ImageNet and JHMDB.}
	\label{fig:nt}
\end{figure}

\noindent \textbf{Results.}
The attack performance in video retrieval is shown in Table \ref{tab:video_map}. Similar to the image scenario, query with benign videos or videos with noise can not successfully retrieve objects with the target label, thus fails to attack the deep hashing based retrieval. In contrast, deep hashing based video retrieval can be easily attacked by designed targeted attacks, especially the DHTA proposed in this paper. For example, the t-MAP value of DHTA is $59\%$ over query with benign videos, and $21\%$ over P2P attack paradigm on the JHMDB dataset with code length 64 bits. The precision-recall and the precision curves also verify the superiority of DHTA over other methods, as shown in Figure \ref{fig:video_pr_prec}. Especially on JHMDB dataset, there exists a significantly large gap between the PR curve of DHTA and those of other methods. In addition, the t-MAP value of DHTA is again significantly larger than the MAP of the benign query (the `Original'). An example of the results of query with a benign video and an adversarial video is displayed in  Figure \ref{fig:video_exp}.

\subsection{Discussion}

\noindent \textbf{Effect of $\bm{n_t}$.}
To analyze the effect of the size of object set for generating the anchor code ($i.e.,$ $n_t$), we discuss the t-MAP of DHTA under different values of $n_t \in \{1, 3, 5, 7, 9, 11, 13\}$. Other settings are the same as those used in Section \ref{sec:image_setting}-\ref{sec:video_setting}. We use ImageNet and JHMDB as the representative for analysis.  

As shown in Figure \ref{fig:nt}, the t-MAP value increase as the increase of $n_t$ under different code lengths. The MAP of corresponding query with benign objects ($i.e.$ the `Original') can be regarded as the reference of the retrieval performance. We observe that the t-MAP is higher than the MAP of its corresponding `Original' method in all cases when $n_t \geq 3$. In other words, DHTA can still have satisfying performance with relatively small $n_t$. This advantage is critical for attackers, since the bigger the $n_t$, the higher the cost of data collection and adversarial generation for an attack. It is worth noting that the attack performance degrades significantly when $n_t=1$, which exactly corresponds to the P2P attack paradigm.

\noindent \textbf{Effect of the Number of Iterations.}
Table \ref{tab:iters_image}-\ref{tab:iters_video} present the t-MAP of DHTA with different iterations on ImageNet and JHMDB datasets. Except for the iterations, other settings are the same as those used in Section \ref{sec:image_setting}-\ref{sec:video_setting}.

As expected, the t-MAP values increase with the number of iterations. Even with relatively few iterations, the proposed DHTA can still achieve satisfying performance. For example, with 100 iterations, the t-MAP values are over $50\%$ under all code lengths. Especially on the ImageNet dataset, the t-MAP is over $70\%$ with relatively larger code length ($\geq$48 bits). These results consistently verify the high-efficiency of our DHTA method.

\begin{table}[htbp]
  \begin{minipage}{0.47\linewidth}
    \centering
		\caption{t-MAP (\%) of DHTA with different iterations on ImageNet.}
			\begin{tabular}{lccccc}
				\toprule
				Iteration& 16bits & 32bits & 48bits & 64bits \\ \midrule
				100  & 52.99& 66.29& 70.65& 72.43\\
				500  & 55.18& 68.30& 74.47& 76.15\\
				1000 & 56.96& 68.36& 75.03& 76.25\\
				1500 & 62.81& 74.11& 79.28& 78.71\\ 
				2000 & 63.68& 77.76& 82.31& 82.10\\ \bottomrule
			\end{tabular}
	        \label{tab:iters_image}
  \end{minipage}
  \quad
  \begin{minipage}{0.47\linewidth}
    \centering
		\caption{t-MAP (\%) of DHTA with different iterations on JHMDB.}
			\begin{tabular}{lccccc}
				\toprule
				Iteration& 16bits & 32bits & 48bits & 64bits \\ \midrule
				10  & 28.51& 23.88& 22.84& 23.21\\
				50  & 48.69& 48.18& 47.01& 48.97\\
				100 & 53.21& 54.91& 55.94& 58.28\\
				500 & 56.47& 62.04& 63.02& 66.06\\ \bottomrule
			\end{tabular}
	        \label{tab:iters_video}
  \end{minipage}
\end{table}

\begin{table}[ht]
	\centering
	\caption{MAP (\%) of different methods on ImageNet and JHMDB. The best results are marked with boldface, while the second best results are marked with underline.}
	\setlength{\tabcolsep}{1.3mm}{
		\begin{tabular}{lcccccccccc}
            \toprule
			\multirow{2}{*}{Method}  & \multicolumn{4}{c}{ImageNet} & & \multicolumn{4}{c}{JHMDB} \\ \cline{2-5} \cline{7-10}
            & 16bits & 32bits & 48bits & 64bits & & 16bits & 32bits & 48bits & 64bits \\ \midrule
            Original   & 51.02& 62.70& 67.80& 70.11 && 35.18&  42.46&	45.80&	45.50&\\
            Noise      & 50.94& 62.52& 66.69& 69.85 && 35.04&	42.15&	45.67&	45.63&\\
            P2P        & 3.36 & \textbf{2.48} & \underline{2.45} & 3.93 && 7.71&	8.20&	8.14&	10.19&\\
            HAG &  \underline{1.88} & 4.96 &  3.89 &  \underline{2.34} && \textbf{3.52}&	\textbf{3.58}&	\textbf{3.42}&	\textbf{3.34}&\\
            DHTA       & \textbf{0.54} &  \underline{5.64} & \textbf{2.30} & \textbf{1.70} &&  \underline{6.76}&	 \underline{7.23}&	 \underline{6.56}&	 \underline{7.55}&\\  \bottomrule 
		\end{tabular}
	}
	\label{tab:untarget}
\end{table}

\noindent \textbf{Evaluation from the Perspective of Non-targeted Attack.}
Targeted attack can be regarded as a special non-targeted attack, since the target label is usually different from the one of query object. In this part, we compare the targeted attacks (P2P and DHTA) with other methods, including additive noise and HAG \cite{yang2018adversarial} (which is the state-of-the-art non-targeted attack), in the non-targeted attack scenario. 

The MAP results of different methods are reported in Table \ref{tab:untarget}. The lower the MAP, the better the non-targeted attack performance. As shown in the table, although targeted attacks are not designed for the non-targeted scenario, they still have competitive performance. For example, the MAP values of DHTA are $50\%$ smaller than those of `Original' under all code length on ImageNet. Especially for the proposed DHTA, it even has better non-targeted attack performance ($i.e.$ smaller MAP) compared with HAG on ImageNet in most cases.

\begin{figure}[ht]
\centering
\subfigure[ImageNet]{
\includegraphics[width=0.99\textwidth]{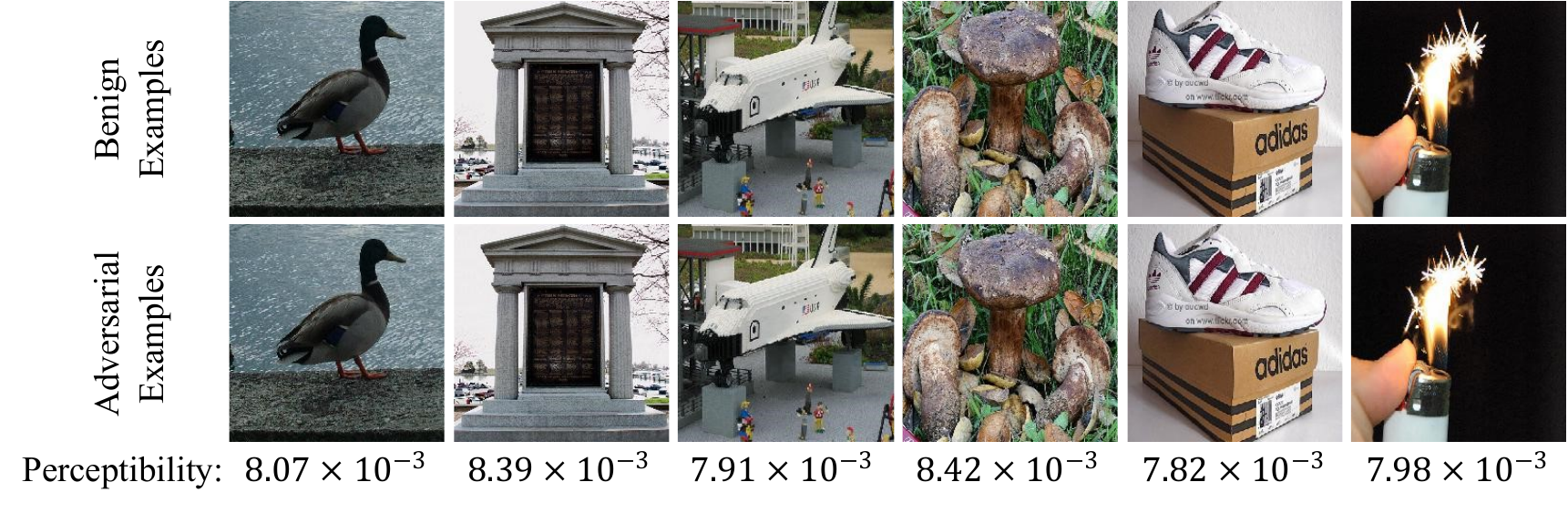}}
\subfigure[NUS-WIDE]{
\includegraphics[width=0.99\textwidth]{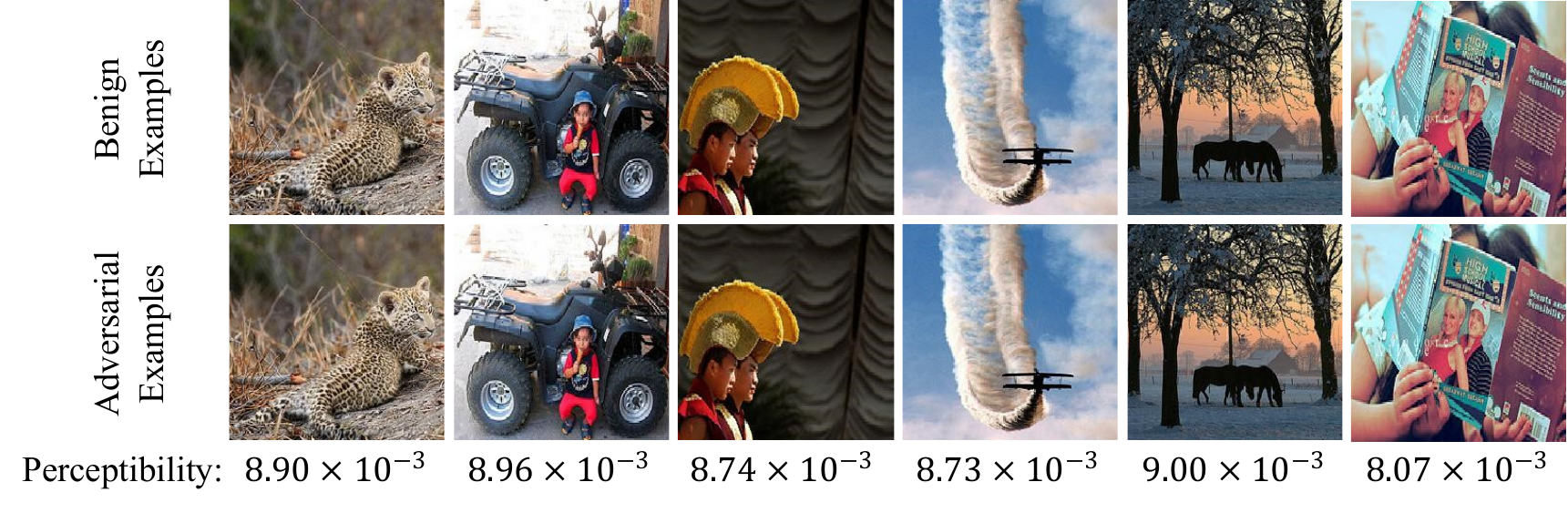}}
\caption{Visualization examples of generated adversarial examples in image hashing.}
\label{fig:adv_exp}
\end{figure}

\noindent \textbf{Perceptibility.}
Except for the attack performance, the \emph{perceptibility} of adversarial perturbations is also important. Following the setting suggested in \cite{szegedy2013,tramer2017ensemble}, given a benign query $\bm{x}$, the perceptibility of its corresponding adversarial query $\bm{x}'$ is defined as $\sqrt{\frac{1}{n}\left\|\bm{x}'-\bm{x}\right\|_2^2}$, where $n$ is the size of the object and pixel values are scaled to be in the range [0, 1].

For each dataset, we calculate the average perceptibility over all generated adversarial objects. The perceptibility value of ImageNet and NUS-WIDE datasets is $8.35\times 10^{-3}$ and $9.07\times 10^{-3}$, respectively. In video retrieval tasks, the value is $5.81\times 10^{-3}$ and $7.72\times 10^{-3}$ on JHMDB and UCF-101 datasets, respectively. These results indicate that the adversarial queries are very similar to their original versions. Some adversarial images are shown in Figure \ref{fig:adv_exp}, while examples of video retrieval are shown in the \textbf{Appendix}.

\subsection{Open-set Targeted Attack}

\noindent \textbf{Evaluation  Setup.}
In the above experiments, the target label is selected from those of training set. In this section, we use ImageNet dataset as an example to further evaluate the proposed DHTA under a tougher open-set scenario, where the out-of-sample class will be assigned as the target label. This setting is more realistic since the attacker may probably not be able to access the training set of the attacked deep hashing model. For example, the deep hashing model may be downloaded from a third-party open-source platform where the training set is unavailable. 

\begin{table}[t]
\centering
\caption{t-MAP (\%) of DHTA with out-of-sample target label on ImageNet.}
\setlength{\tabcolsep}{1mm}{
	\begin{tabular}{lcccc}
		\toprule
		\multicolumn{1}{c}{Method} & 16bits & 32bits & 48bits & 64bits \\ \midrule
        DHTA ($n_t=5$)  & 33.67& 46.34& 48.91& 48.27  \\
		DHTA ($n_t=7$)  & 34.77& 50.92& 51.68& 49.18  \\
		DHTA ($n_t=9$)  & 37.34& 54.13& 55.12& 52.17  \\
		DHTA ($n_t=11$) & 38.00& 54.05& 56.93& 54.12  \\ 
		\bottomrule
	\end{tabular}
}
\label{fig:unseen}
\end{table}

Specifically, we randomly select 10 additional classes different from those used for training a deep hashing model in Section \ref{sec:datasets_metric}. These 
selected images from 10 additional classes will be treated as an open set for our evaluation. When generating the anchor code of objects with the target label (within the open set), we remain our deep hashing model trained on the previous 100 classes.

\noindent \textbf{Results.}
As shown in Table \ref{fig:unseen}, DHTA still has a certain attack effect even if the target label is out-of-sample. Especially when the $n_t$ and the code length tend larger, the t-MAP values of DHTA are over $50\%$. This phenomenon may reveal that the learned feature extractor did learn some useful low-level features, which represents those objects with the same class in some similar locations in Hamming space, no matter the class is learned or not. In addition, the attack performance is also increasing with the $n_t$ and code length. 

\section{Conclusion and Future Work}

In this paper, we explore the landscape of the targeted attack for deep hashing based retrieval. Based on the characteristics of the retrieval task, we formulate the attack as a point-to-set optimization, which minimizes the average distance between the hash code of the adversarial example and those of a set of objects with the target label. Theoretically, we propose a component-voting scheme to obtain the optimal representative, the anchor code, for the code set of point-to-set optimization. Based on the anchor code, we propose a novel targeted attack method, the DHTA, to balance the performance and perceptibility through minimizing the Hamming distance between the hash code of adversarial example and the anchor code under the $\ell^\infty$ restriction on the adversarial perturbation. Extensive experiments are conducted, which verifies the effectiveness of DHTA in attacking both deep hashing based image retrieval and video retrieval. To alleviate the proposed threat, we will discuss how to generalize existing adversarial training based methods from P2P to the P2S scheme for the defense. The specific approaches will be further demonstrated in our future works.

\section*{Acknowledgments}

This work is supported in part by the National Key Research and Development Program of China under Grant 2018YFB1800204, the National Natural Science Foundation of China under Grant 61771273, the R\&D Program of Shenzhen under Grant JCYJ20180508152204044, the project ``PCL Future Greater-Bay Area Network Facilities for Large-scale Experiments and Applications (LZC0019)'', the Natutal Sciences and Engineering Research Council of Canada under Grant RGPIN203035-16, and the Canada Research Chairs Program. We also thank vivo and Rejoice Sport Tech. co., LTD. for their GPUs. 

\clearpage
%
%
\bibliographystyle{splncs04}
\bibliography{egbib}

\newpage

\begin{center}
    \begin{Large}
        \textbf{Appendix}
    \end{Large}
\end{center}

\begin{subappendices}
\setcounter{lemma}{0}
\setcounter{theorem}{0}
\renewcommand{\thesection}{\Alph{section}}

\section{Proof of Theorem 1}
\begin{theorem} \label{thm1}
 Anchor code ${\bm{h}_a}$ calculated by Algorithm 1 is the binary code achieving the minimal sum of Hamming distances with respect to ${\bm{h}_i}$, $i = 1, \dots, n_t$, i.e., 
 \begin{equation}
 {\bm{h}_a}=\arg \min_{\bm{h}\in\{+1,-1\}^K}\sum_{i=1}^{n_t}d_H({\bm{h}}, {\bm{h}_i}).
 \end{equation}
\end{theorem}

\begin{proof}
We only need to prove that for any ${{\bm{h}}\in\{+1,-1\}}^K$ and ${\bm{ h}}\neq {\bm{h}_a}$, the following inequality holds.
\begin{equation}
\sum_{i}^{n_t}d_H({\bm{h}_a}, {\bm{h}_i})\leq \sum_{i}^{n_t}d_H({\bm{h}}, {\bm{h}_i}).
\end{equation}
Denote $\mathcal{D}=\{j_1, j_2,\dots, j_{K_0}\}$, $1\leq K_0\leq K$, as the index set where  ${\bm{h}}$ and  ${\bm{h}_a}$ differ. Then we have 
\begin{eqnarray}
&\quad&\sum_{i}^{n_t}d_H({\bm{h}_a}, {\bm{ h}_i})\nonumber\\
&=&\sum_{j\in \mathcal{D}}d_H({\bm{h}_a^j}, {\bm{h}_i^j})+\sum_{j\in \{1, 2, \dots, K\}\setminus\mathcal{D}}d_H({\bm{h}_a^j}, {\bm{h}_i^j})\\
&=&\sum_{j\in \mathcal{D}}\big(n_t-\sum_{i}^{n_t}\mathbb{I}({\bm{h}_a^j}={\bm{h}_i^j})\big)+\sum_{j\in \{1, 2, \dots, K\}\setminus\mathcal{D}}\big(n_t-\sum_{i}^{n_t}\mathbb{I}({\bm{ h}_a^j}={\bm{ h}_i^j})\big)\\
&\overset{(a)}{\leq}& \sum_{j\in \mathcal{D}}\big(n_t-\sum_{i}^{n_t}\mathbb{I}({\bm{h}^j}={\bm{h}_i^j})\big)+\sum_{j\in \{1, 2, \dots, K\}\setminus\mathcal{D}}\big(n_t-\sum_{i}^{n_t}\mathbb{I}({\bm{h}^j}={\bm{h}_i^j})\big)\\
&=&\sum_{i}^{n_t}d_H({\bm{h}}, {\bm{h}_i}),
\end{eqnarray}
where $(a)$ holds since anchor code $\bm{h}_a$ is obtained through a voting process (as shown in Algorithm 1 in the main manuscript), $i.e.$,  $ \forall j\in \mathcal{D}$, 
\begin{equation}
    \sum_{i=1}^{n_t} \mathbb{I}({\bm{h}_a^j}={\bm{h}_i^j})\geq \sum_{i=1}^{n_t} \mathbb{I}({\bm{h}^j}={\bm{h}_i^j}).
\end{equation}
$\hfill\blacksquare$ 

\end{proof}

\section{Threat Models}

All experiments are implemented on the PyTorch framework \cite{paszke2019pytorch}. 
The detailed training settings are shown as follows.

\vspace{0.5em}
\noindent \textbf{Image Hashing.}
We adopt VGG-11 \cite{simonyan2014very} as the backbone network pre-trained on ImageNet to extract features, then replace the last fully-connected layer of softmax classifier with the hashing layer. We fine-tune the base model and train the hash layer from scratch through the pairwise loss function in \cite{yang2018adversarial}. We employ stochastic gradient descent (SGD) \cite{zhang2004solving} with momentum 0.9 as the optimizer. The weight decay parameter is set to 0.0005. The learning rate is fixed at 0.01 and the batch size is 24.

\vspace{0.5em}
\noindent \textbf{Video Hashing.} We extract frame features using AlexNet \cite{krizhevsky2012imagenet} pretrained on the ImageNet dataset. Then we employ the objective function in \cite{liong2016deep} to train LSTM \cite{hochreiter1997long} with the hash layer from scratch. The parameter in the objective function to balance discriminative loss and quantization loss is set to 0.0001. SGD is used to optimize model parameters, with the momentum 0.9 and the fixed learning rate 0.05. The weight decay parameter is set to 0.0001. The batch size is set to 100 and the maximum length of input videos is 40. Due to different video sizes for two video datasets, we adopt different strategies to sample video frames. For the JHMDB dataset, we select all frames of a video whose length is smaller than 40  and top-40 frames otherwise. For the UCF-101 dataset, we select frames with equal stride (set to 3) for each video.

\section{Datasets Description}

Four retrieval benchmark datasets are adopted in our experiments. The first two datasets are used for image retrieval, while the last two are used for video retrieval. These datasets are described in details as follows.

\begin{itemize}
    \item \emph{ImageNet} \cite{russakovsky2015imagenet} consists of 1.2M training samples and 50,000 testing samples with 1000 classes. We follow \cite{cao2017hashnet} to build a subset containing 130K images with 100 classes. We use images from the training set as the database, and images from the testing set as the queries. We sample 100 images per class from the database for training the deep hashing model. 
    \item \emph{NUS-WIDE} \cite{chua2009nus} dataset contains 269,648 images from 81 classes. We only select a subset of images with the 20 most frequent labels. We randomly sample 5000 images as the query set and take the remaining images as the database, following \cite{zhu2016deep}. Besides, we randomly sample 10,000 images from the database to train the hashing model.  
    \item \emph{JHMDB} \cite{jhuang2013towards} consists of 928 videos in 21 categories. We randomly choose 10 videos per category as queries, 10 videos per category as training samples, and the rest as retrieval database. 
    \item \emph{UCF-101} \cite{soomro2012ucf101} is an action recognition dataset, which contains 13,320 videos categorized into 101 classes. We use 30 videos per category for training, 30 videos per category for querying and the remaining 7,260 videos as the database.
\end{itemize}




\begin{figure}[!ht]
\vspace{1em}
\centering
\subfigure[JHMDB]{
\includegraphics[width=0.99\textwidth]{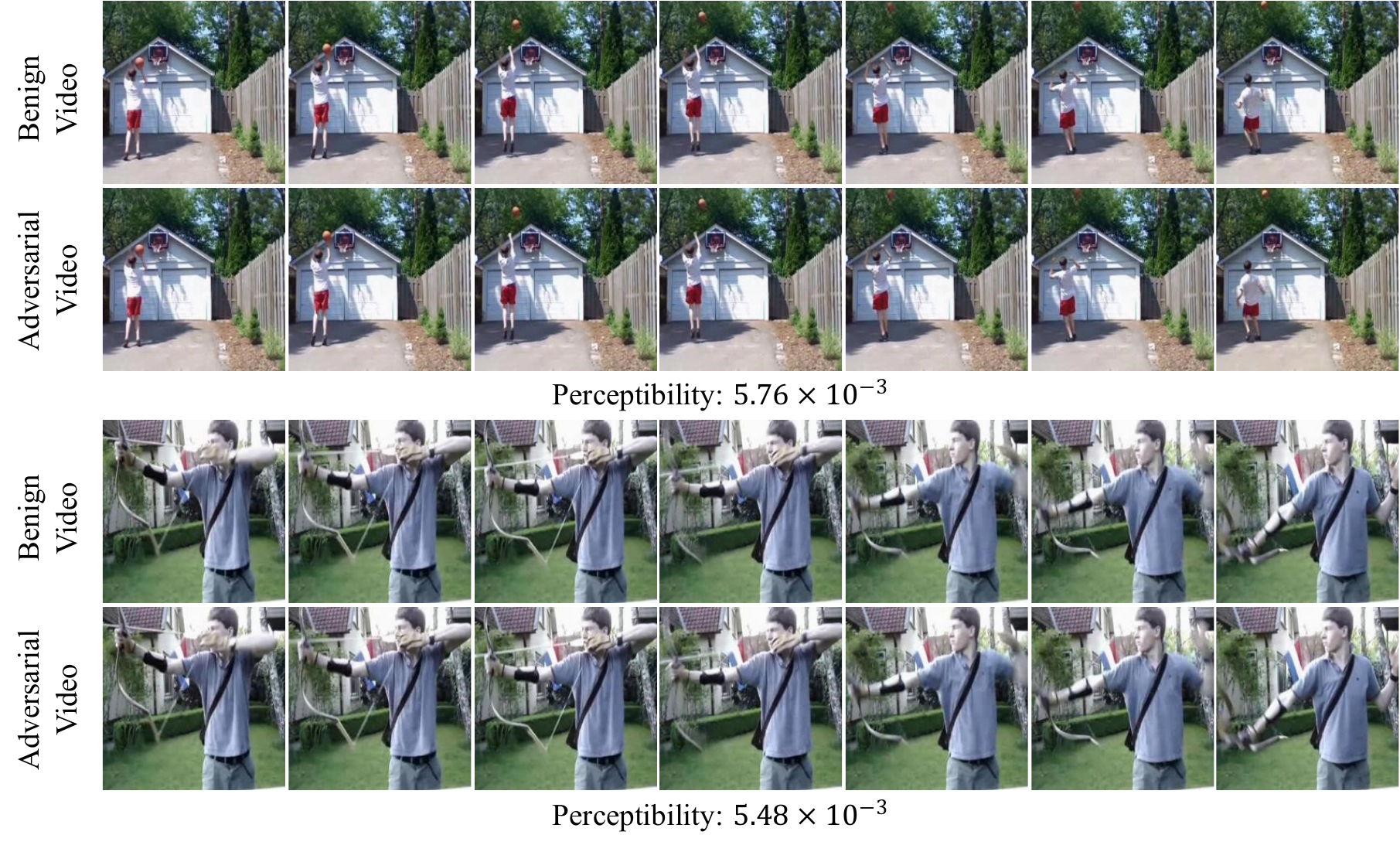}}
\subfigure[UCF-101]{
\includegraphics[width=0.99\textwidth]{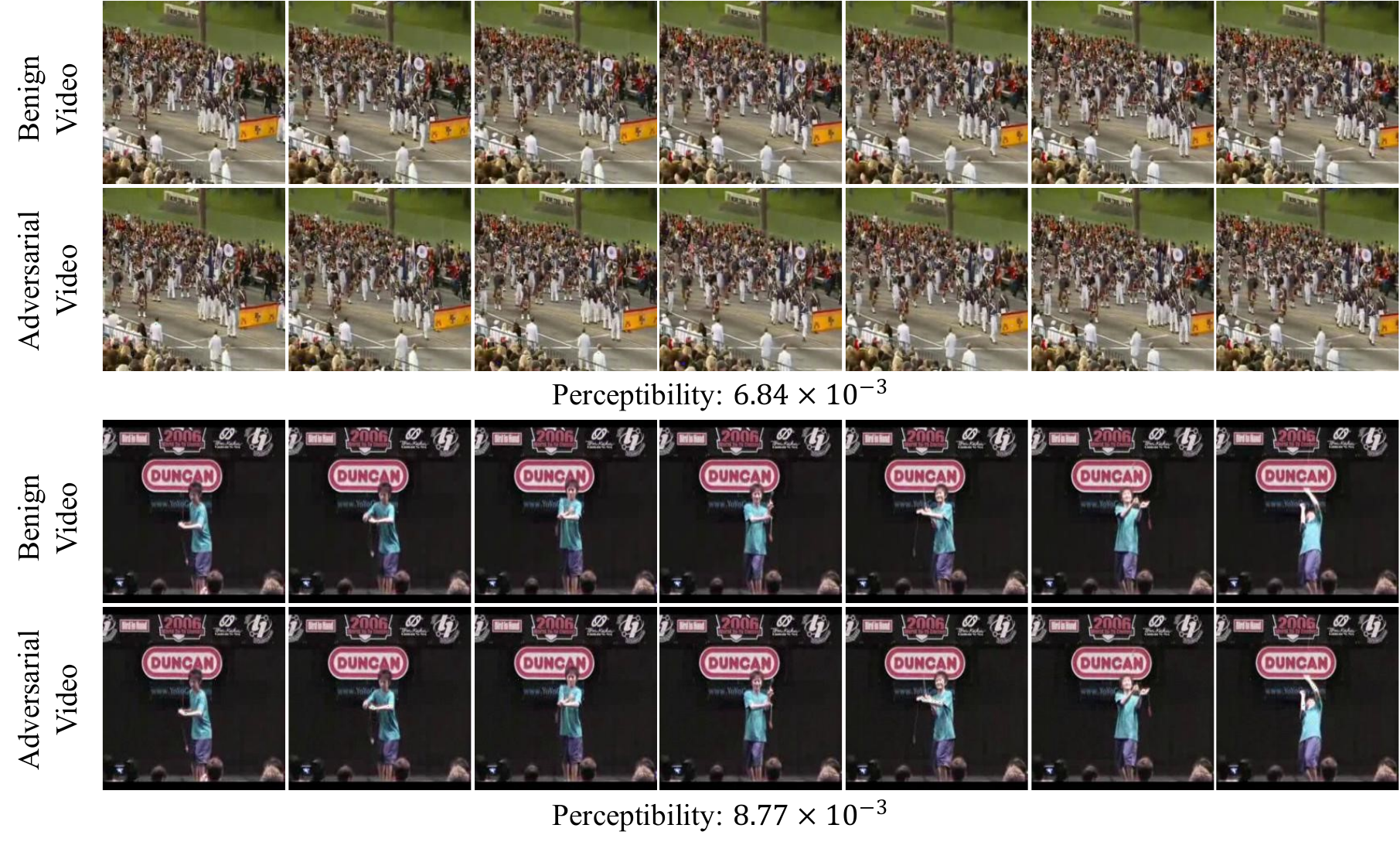}}
\caption{Visualization examples of generated adversarial examples in video hashing.}
\label{fig:adv_video_exp}
\vspace{3em}
\end{figure}

\begin{figure}[ht]
	\vspace{1em}
	\centering
	\includegraphics[width=0.99\textwidth]{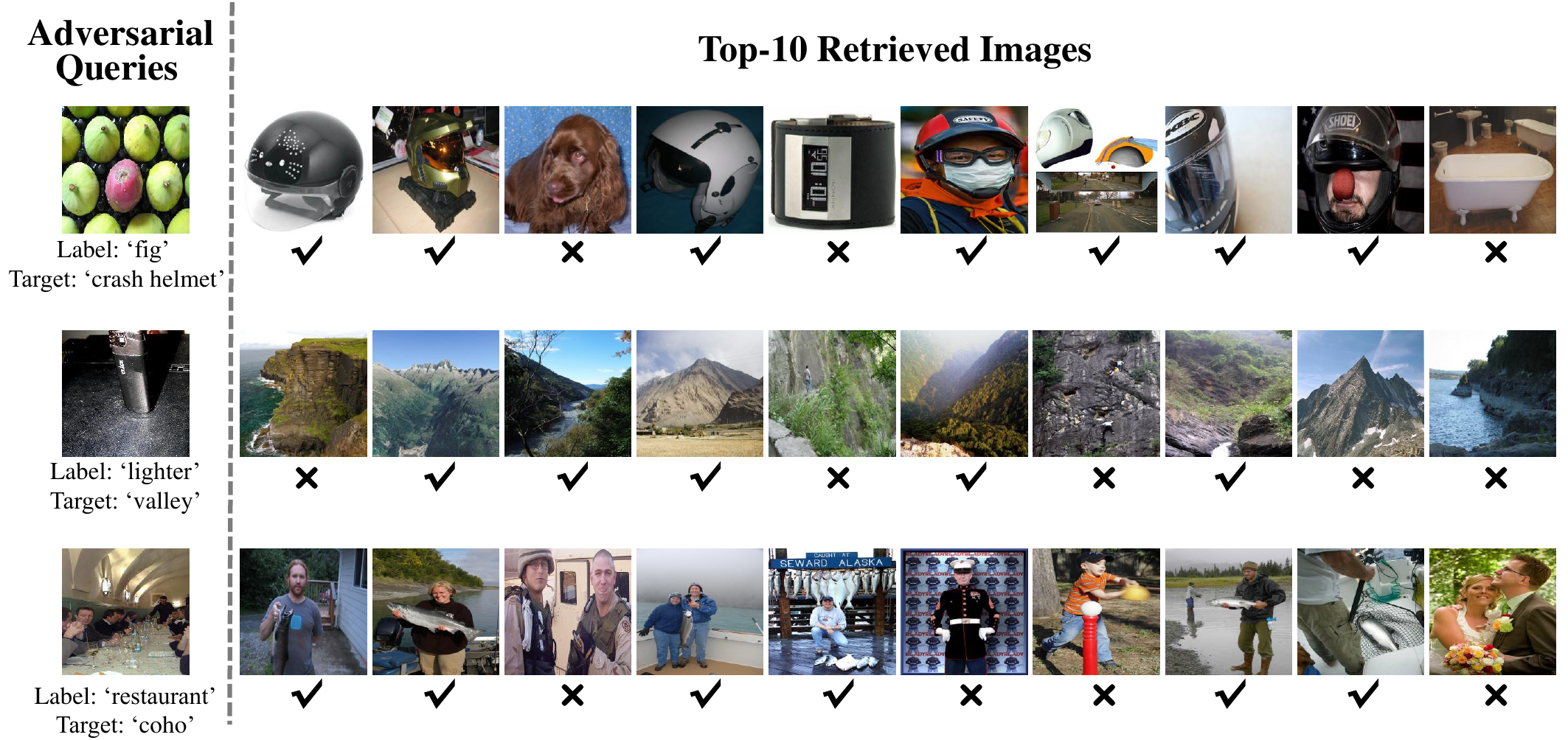}
	
	\caption{Examples of image retrieval with adversarial query on ImageNet. All target labels are randomly selected from the out-of-sample class labels. Retrieved objects with top-10 similarity are shown on the right. The tick and cross indicate whether the retrieved object is consistent with the target label.}
	\label{fig:open-set}
\end{figure}

\section{Visualization}
In this section, we provide some visual examples of DHTA in video hashing and open-set scenario.

\vspace{0.5em}
\noindent \textbf{Video Hashing.} 
Some examples of generated adversarial videos and their correspondingly benign videos are shown in Figure \ref{fig:adv_video_exp}. Specifically, we present frames with indexes $\in \{3,6,9,12,15,18,21\}$ for each video due to the limitation of the space. Similar to the image scenario, these visual results show that the adversarial queries are very similar to their original versions. In other words, the generated adversarial objects of our proposed DHTA are human-imperceptible.


\vspace{0.5em}
\noindent \textbf{Open-set Targeted Attack.}
We demonstrate some generated adversarial examples and their corresponding retrieved images under an open-set scenario in Figure \ref{fig:open-set}. Even if this setting is tougher, there still exist some images with targeted label in the top-10 retrieved images. This result reveals that our proposed DHTA can successfully fool deep hashing model to return objects from out-of-sample class.

\section{Further Discussion}

\noindent \textbf{Attack Towards the Advanced Model.}
To verify the effectiveness of the proposed method in attacking the advanced deep hashing based retrieval model, we conduct experiments against HashNet \cite{cao2017hashnet} on the NUS-WIDE dataset. Evaluation settings are the same as those used in Section 4.2. As shown in Table \ref{tab:hashnet}, DHTA can successfully attack HashNet with the high t-MAP. Compared with P2P, the t-MAP improvement of DHTA is over 7\% in all cases. Moreover, the t-MAP value of DHTA is significantly higher than the MAP value of the `Original'. These results also verify the high effectiveness of our DHTA method.

\begin{table}[t]
	\centering
    \caption{t-MAP (\%) of targeted attack methods and MAP (\%) of query with benign objects (‘Original’) with various code lengths on NUS-WIDE dataset.}
	\setlength{\tabcolsep}{1.5mm}{
        \begin{tabular}{cccccc}
        \hline
        Method   & Metric & 16bits & 32bits & 48bits & 64bits \\ \hline
        Original & t-MAP  & 43.28  & 41.88  & 43.44  & 43.55  \\
        Noise    & t-MAP  & 42.67  & 39.86  & 42.50  & 41.56  \\
        P2P      & t-MAP  & 78.10  & 81.79  & 81.74  & 82.68  \\
        DHTA     & t-MAP  & \bf 86.95  & \bf 89.02  & \bf 89.68  & \bf 90.49  \\ \hline
        Original & MAP    & 79.95  & 81.88  & 83.11  & 84.96  \\ \hline
        \end{tabular}
    }
	\label{tab:hashnet}
\end{table}

\begin{table}[t]
	\centering
    \caption{t-MAP (\%) of DHTA with different $\epsilon$ under 32-bits code length on ImageNet and JHMDB dataset.}
	\setlength{\tabcolsep}{1.5mm}{
        \begin{tabular}{ccccccc}
        \hline
        $\epsilon$ & 0.01  & 0.02  & 0.03  & 0.04  & 0.05   \\ \hline
        ImageNet & 25.39 & 61.32 & 75.06 & 79.01 & 79.17  \\ \hline
        JHMDB    & 51.08 & 60.20  & 61.23 & 61.75 & 62.76  \\ \hline
        \end{tabular}
    }
	\label{tab:eps}
\end{table}

\begin{table}[!t]
	\centering
    \caption{t-MAP (\%) of different attack methods on ImageNet and JHMDB dataset.}
	\setlength{\tabcolsep}{1mm}{
        \begin{tabular}{ccccclcccc}
        \hline
        \multirow{2}{*}{Method} & \multicolumn{4}{c}{ImageNet}      &  & \multicolumn{4}{c}{JHMDB}         \\ \cline{2-5} \cline{7-10} 
                                & 16bits & 32bits & 48bits & 64bits &  & 16bits & 32bits & 48bits & 64bits \\ \hline
        Feature-based Attack    & 23.80  & 28.47  & 34.24  & 33.05  &  & 30.47  & 44.95  & 42.48  & 57.54  \\
        DHTA                    & 63.68  & 77.76  & 82.31  & 82.10   &  & 56.47  & 62.04  & 63.02  & 66.06  \\ \hline
        \end{tabular}
    }
	\label{tab:fea}
\end{table}

\vspace{0.6em}
\noindent \textbf{Effect of the Maximum Perturbation Strength.}
To analyze the effect of the maximum perturbation strength ($i.e., \epsilon$), we examine the t-MAP of DHTA under different values of $\epsilon \in \{0.01, 0.02, 0.03, 0.04, 0.05\}$. Table \ref{tab:eps} presents the t-MAP of DHTA under 32-bits code length on ImageNet and JHMDB datasets. It can be seen that the attack performance (t-MAP) improves as the increase of $\epsilon$, which demonstrates the trade-off between the perceptibility of adversarial perturbations and attack performance.

\vspace{0.6em}
\noindent \textbf{Comparison with Feature-based Attack.}
Although many adversarial attacks in the image recognition are proposed, most of them cannot be directly adopted due to the property of the retrieval. The feature-based attack \cite{inkawhich2019feature}, as an exception, can be naturally extended to attack deep hashing models. The main idea of feature-based attack is to make the adversarial image closed to an image of the target class in the feature space. We choose the intermediate feature before the hashing layer to perform the feature-based attack. The comparison between feature-based attack and DHTA on the ImageNet and JHMDB datasets is shown in Tabel \ref{tab:fea}. 
There exists a large gap (around 40\% for ImageNet and 20\% for JHMDB) between the feature-based attack and DHTA. Such a superior result of DHTA reveals that, manipulating in Hamming space is more effective than feature space for attacking deep hashing model.

\end{subappendices}

\end{document}